\newtheorem{theorem}{Theorem}
\newtheorem{lemma}{Lemma}
\newcommand{\mbbR}{\mathbb{R}} 
\newcommand{\bdr}{\boldsymbol{r}} 
\newcommand{\bdw}{\boldsymbol{w}} 
\newcommand{\bdu}{\boldsymbol{u}}
\newcommand{\bdalpha}{\boldsymbol{\alpha}} 
\newcommand{\bdbeta}{\boldsymbol{\beta}}
\newcommand{\bdzeta}{\boldsymbol{\zeta}}
\newcommand{\bdq}{\boldsymbol{q}}
\newcommand{\mcL}{\mathcal{L}}
\begin{document}

\title{An Expectation-Maximization Relaxed Method for Privacy Funnel 
\thanks{The first two authors contributed equally to this work and $\dag$ marked the corresponding author. This work was partially supported by National Key Research and Development Program of China (2018YFA0701603) and National Natural Science Foundation of China (12271289 and 62231022).}
}

\author[1]{Lingyi Chen}
\author[1]{Jiachuan Ye}
\author[1]{Shitong Wu}
\author[2$\dag$]{Huihui Wu}
\author[1]{Hao Wu}
\author[3]{Wenyi Zhang}
\affil[1]{Department of Mathematical Sciences, Tsinghua University, Beijing 100084, China}
\affil[2]{Yangtze Delta Region Institute (Huzhou),
\authorcr University of Electronic Science and Technology of China, Huzhou, Zhejiang, 313000, P.R. China.} 
\affil[3]{Department of Electronic Engineering and Information Science, 
\authorcr University of Science and Technology of China, Hefei, Anhui 230027, China 
\authorcr Email: huihui.wu@ieee.org}

\maketitle


\begin{abstract}
The privacy funnel (PF) gives a framework of privacy-preserving data release, where the goal is to release useful data while also limiting the exposure of associated sensitive information. This framework has garnered significant interest due to its broad applications in characterization of the privacy-utility tradeoff. Hence, there is a strong motivation to develop numerical methods with high precision and theoretical convergence guarantees. In this paper, we propose a novel relaxation variant based on Jensen's inequality of the objective function for the computation of the PF problem. This model is proved to be equivalent to the original in terms of optimal solutions and optimal values. Based on our proposed model, we develop an accurate algorithm which only involves closed-form iterations. The convergence of our algorithm is theoretically guaranteed through descent estimation and Pinsker's inequality. Numerical results demonstrate the effectiveness of our proposed algorithm. 
\end{abstract}


\section{Introduction}

An increasing amount of private user data is flowing into the network nowadays, probably collected by certain individuals or companies eventually for customizing personalized services or other purposes. Usually, such data contains private or sensitive information. 
%
%
%
%
%
Considering general content of private information and the task of system, the problem is reduced to learning private representations, \textit{i.e.}, representations that are informative of the data (utility) but not of the private information. 
Researchers have started to model and study privacy protection mechanisms, in order to develop privacy preserving technologies and characterize the privacy-utility tradeoff. 
A general framework of statistical inference from an information-theoretic perspective has been proposed in \cite{calmon2012privacy}.
Specifically, given a public variable $X\in\mathcal{X}$ we want to transmit, and a correlated variable $S\in\mathcal{S}$ we want to keep private, one needs to encode $X$ as a variable $Y\in\mathcal{Y}$, forming a Markov chain
\[S\longleftrightarrow X\longrightarrow Y.\] 
Our goal is to minimize the average cost gain by the adversary after observation, while keeping the distortion of privacy-preserving mapping under certain threshold. 
When the self-information cost and log-loss metric are introduced, the privacy funnel (PF) \cite{makhdoumi2014privacyfunnel} is formulated to find a privacy preserving mapping from $\mathcal{X}$ to $\mathcal{Y}$, such that it minimizes the average information leakage $I(S;Y)$ with the disclosure $I(X;Y)$ kept above a certain threshold. 
More precisely, given the joint distribution $P_{S,X}$, the PF problem pursues the above tradeoff 
by considering the following optimization problem
\begin{equation}
\min\limits_{P_{Y|X}}I(S;Y),\quad\text{s.t.}~I(X;Y)\geq R,\label{pf-problem}
\end{equation}
where $R\leq H(X)$ to ensure that the problem is feasible\cite{makhdoumi2014privacyfunnel}. 
Research related to the PF model has covered a variety of setups in information theory \cite{calmon2015perfectprivacy,du2017Principal,asoodeh2019estimation}, machine learning \cite{romanelli2019generating,tripathy2019adversarial,bertran2019adversarial,rodriguez2021variational} and other fields. 
Moreover, the PF problem can be viewed as a dual of the well-known information bottleneck (IB) problem \cite{makhdoumi2014privacyfunnel,bu2021sdp}, suggesting an intriguing connection between them. 
However, different from the IB problem benefiting from a variety of algorithms including the BA algorithm \cite{blahut1972computation,tishby2000information} and the recently proposed ABP algorithm \cite{chen2023srib}, the PF problem still lacks an adequately effective algorithm. 
In fact, the PF problem is inherently non-convex, and therefore developing its numerical algorithms is a generally challenging task. 
Several algorithms have been proposed to solve the PF problem, but it is difficult to ensure effectiveness and convergence guarantee simultaneously. 
The greedy algorithm \cite{makhdoumi2014privacyfunnel} and the submodularity-based algorithm \cite{ding2019submodularity} motivated by agglomerative clustering \cite{slonim1999agglomerativeib} have been proposed early, merging the alphabet of sanitized variable to construct the mapping. 
Although strict descent of the objective is ensured, it only descends to a local minimum, resulting in limited computational accuracy and efficiency due to greedy search. 
The semi-definite programming (SDP) framework \cite{bu2021sdp} has also been applied to the PF problem, yet solving the SDP proves to be a time-consuming endeavor. 
Besides, variational approaches have been proposed in \cite{razeghi2023club,razeghi2024dvpf}, where each parameter is learnt through gradient descent. 
They only provide approximate results and require intensive computations to obtain gradient for each iteration, severely limiting the computational efficiency. 
A recent work \cite{huang2022splitting} has proposed a unified framework to solve the IB and PF problems through the Douglas-Rachford splitting (DRS) method, ensuring locally linear rate of convergence. 
%
%
%
%
This approach involves solving complex subproblems via gradient descent, and exceedingly large penalty is required in large-scale scenarios to ensure convergence, leading to gradient explosion and numerical instability. 
Consequently, the approach has been primarily suitable for limited scale cases. 
Such difficulty has been tackled with variational inference in a subsequent work\cite{huang2024dca}, but it only deals with a surrogate bound as an approximation.

In order to address the aforementioned difficulty, we propose a novel approach to solve the PF model by computing its upper bound relaxation variant, which is derived under the inspiration of the E-step of the celebrating Expectation-Maximization (EM) algorithm \cite{gupta2011EM}. 
This algorithm, named as the Alternating Expectation Minimization (AEM) algorithm, is developed to minimize the Lagrangian in an alternative manner, where each primal variable can be computed by a closed-form expression, with dual variables similarly updated or searched via Newton's method in only a few inner iterations. 
The closed-form solutions ensure the efficiency of our algorithm. 
Moreover, the descent of objective is theoretically estimated, and the convergence of iterative sequence to a Karush-Kuhn-Tucker (KKT) point is guaranteed by the Pinsker's inequality. 
Numerical experiments exhibit the effectiveness of our algorithm in a wide range of scenarios including traditional distributions and real-world datasets.  
%

\section{Problem Formulation}

Consider a discrete public variable $X\in\mathcal{X}$ with a relevant private variable $S\in\mathcal{S}$, and a representation variable $Y\in\mathcal{Y}$, where $\mathcal{S}=\{s_1,\cdots,s_K\}$, $\mathcal{X}=\{x_1,\cdots,x_M\}$, and $\mathcal{Y}=\{y_1,\cdots,y_N\}$. Then the Markov chain $S\leftrightarrow X\rightarrow Y$ yields the following distribution 
\begin{align*}
P_{S,Y}(s,y)=\sum\limits_{s,x,y}P_{S,X,Y}(s,x,y)=\sum\limits_x P_{S|X}(s|x)P_{X,Y}(x,y).
\end{align*}
Denote $s_{ki}=P_{S|X}(s_k|x_i)$, $u_{ij}=P_{X,Y}(x_i,y_j)$, $w_{ij}=P_{X|Y}(x_i|y_j)$, $r_j=P_Y(y_j)$, $p_i=P_X(x_i)$ and $\hat{R}=R+\sum\limits_i p_i\log p_i$, 
then the PF problem \eqref{pf-problem} can be formulated properly as the following constrained optimization problem
\begin{subequations}\label{original-pf}
\begin{align}
\min\limits_{\bdu,\bdw,\bdr}&\quad\sum\limits_{j,k}\Big(\sum\limits_i s_{ki}u_{ij}\Big)\Big(\log\big(\sum\limits_i s_{ki}u_{ij}\big)-\log r_j\Big), \\
\text{s.t.}&\quad\sum\limits_j u_{ij}=p_i,~\forall i;\quad\quad\sum\limits_i u_{ij}=r_j,~\forall j; \\
&\quad\sum\limits_i w_{ij}=1,~\forall j;\quad\quad u_{ij}=w_{ij}r_j,~\forall i,j; \\
&\quad\sum\limits_j r_j=1;\quad\quad\sum\limits_{i,j}u_{ij}\log w_{ij}\geq\hat{R},
\end{align}
\end{subequations}
where a feasible solution $\mathcal{Y}\supseteq\mathcal{X},p(x|y)=\textbf{1}\{x=y\}$ exists if $R\leq H(X)$\cite{makhdoumi2014privacyfunnel}. 
It is worth mentioning that our formulation \eqref{original-pf} guarantees the convexity with respect to each variable, which ensures numerical stability during optimization. 

\section{Upper Bound Relaxation Variant and Its Equivalence with the Original Model}
For the PF problem \eqref{original-pf} formulated above, it is still difficult to solve the variable $\bdu$, mainly due to the term taking the logarithm of the sum in the objective
\[f(\bdu,\bdr)=\sum\limits_{j,k}\Big(\sum\limits_i s_{ki}u_{ij}\Big)\Big(\log\big(\sum\limits_i s_{ki}u_{ij}\big)-\log r_j\Big).\]
Therefore, we propose the following upper bound to relax the problem (for similar techniques, see \cite{rodriguez2021variational,kolchinsky2019nonlinear}): 
\begin{equation}
\tilde{f}(\bdu,\bdr,\bdq)=\sum\limits_{i,j,k}s_{ki}u_{ij}\Big(\log(s_{ki}u_{ij})-\log r_j-\log q_{ijk}\Big), \label{objective}
\end{equation}
where $\sum\limits_i q_{ijk}=1$, making it much easier to minimize with respect to $\bdu$ from the first-order conditions. 
The corresponding novel model is the basis for us to design an alternating algorithm.

Surprisingly, as will be shown, the upper bound relaxation variant is equivalent to the original problem \eqref{original-pf}, so there is no loss considering such a variant. 

The following two subsections will specifically elaborate on these two points, constituting our main contributions. 
\subsection{Upper Bound Relaxation Variant}
First, we notice that the sum $\sum\limits_is_{ki}u_{ij}$ corresponds to $P_{S,Y}$, so the idea similar to the E-step of the EM algorithm \cite{gupta2011EM} inspires us to estimate $P_{X|S,Y}$ first, from which an upper bound of objective is established. 
In this scenario, $X$ corresponds to the latent variable in the EM algorithm. 
More specifically, the upper bound is given by
\begin{align*}
&\sum\limits_{j,k}\Big(\sum\limits_i s_{ki}u_{ij}\Big)\log\Big(\sum\limits_i s_{ki}u_{ij}\Big) \\
\leq&\sum\limits_{i,j,k} s_{ki}u_{ij} \Big(\log(s_{ki}u_{ij})-\log q_{ijk}\Big),
\end{align*}
where $\bdq$ is an auxiliary variable such that $\sum\limits_i q_{ijk}=1$. The equality holds if and only if $q_{ijk}=s_{ki}u_{ij}\Big/\Big(\sum\limits_{i'}s_{ki'}u_{i'j}\Big)$. 

Next, the following constraints \eqref{relaxed-constraints} given by the Markov chain and the transition probability conditions can be relaxed from our model, similar to the treatment in our previous work \cite{chen2023srib}. 
It is reasonable since they can be restored in our update scheme. Moreover, after relaxation we obtain an optimization problem that is convex with respect to each variable. It can be solved by analyzing the Lagrangian with closed-form iterations. 
\begin{subequations}\label{relaxed-constraints}
\begin{align}
&\sum\limits_i u_{ij}=r_j,\quad\forall j;\label{relaxed-r} \\
&u_{ij}=w_{ij}r_j,\quad\forall i,j;\label{relaxed-w} \\
&q_{ijk}=s_{ki}u_{ij}\Big/\Big(\sum\limits_{i'}s_{ki'}u_{i'j}\Big),\quad\forall i,j,k.\label{relaxed-q}
\end{align}
\end{subequations}

Under these crucial observations, the process of directly optimizing $\bdu$ in the original PF problem \eqref{original-pf} can be transformed to that of estimating $\bdq$ first, and optimizing $\bdu$ thereafter in our upper bound relaxation variant: 
\begin{subequations}\label{relaxed-pf}
\begin{align}
\min\limits_{\bdu,\bdw,\bdr,\bdq}&\quad\sum\limits_{i,j,k} s_{ki}u_{ij}\Big(\log(s_{ki}u_{ij})-\log r_j-\log q_{ijk}\Big), \\
\text{s.t.}&\quad\sum\limits_j u_{ij}=p_i,~\forall i;\quad\quad\sum\limits_i w_{ij}=1,~\forall j; \label{constraint-w}\\
&\quad\sum\limits_j r_j=1;\quad\quad\quad\quad\sum\limits_i q_{ijk}=1,~\forall j,k; \label{constraint-rq}\\
&\quad\sum\limits_{i,j}u_{ij}\log w_{ij}\geq\hat{R}.
\end{align}
\end{subequations}

\subsection{Equivalence of Optimal Solutions}

We establish the equivalence between model \eqref{original-pf} and its upper bound relaxation variant \eqref{relaxed-pf} as shown in the following theorem.
\begin{theorem}
The optimal values as well as the optimal triples $(\bdu^\star,\bdw^\star,\bdr^\star)$ of \eqref{original-pf} and \eqref{relaxed-pf} are identical. 
\end{theorem}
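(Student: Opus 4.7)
The plan is to prove the equivalence by establishing both inequalities on the optimal values and showing that the first-order conditions on $\bdw$ and $\bdr$ automatically recover the two constraints dropped in passing from \eqref{original-pf} to \eqref{relaxed-pf}. Observe that \eqref{relaxed-pf} is a relaxation in two simultaneous senses: its objective $\tilde{f}$ upper-bounds $f$ via Jensen's inequality, and its feasible set is enlarged because \eqref{relaxed-r} and \eqref{relaxed-w} have been dropped. The claim is that both forms of slackness close at every optimum.

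First I would prove that the optimum of \eqref{relaxed-pf} is at most that of \eqref{original-pf}. Given any feasible triple $(\bdu,\bdw,\bdr)$ of \eqref{original-pf}, extend it by setting $q_{ijk}=s_{ki}u_{ij}/\sum_{i'}s_{ki'}u_{i'j}$ (assigning an arbitrary probability vector when the denominator vanishes, with the convention $0\log 0=0$). Then $\sum_i q_{ijk}=1$, and this is precisely the equality case of the Jensen inequality used to derive $\tilde{f}$ in \eqref{objective}, so $\tilde{f}(\bdu,\bdr,\bdq)=f(\bdu,\bdr)$. All remaining constraints of \eqref{relaxed-pf} are already imposed in \eqref{original-pf}, hence the extended tuple is feasible with identical objective value.

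For the reverse inequality, let $(\bdu^\star,\bdw^\star,\bdr^\star,\bdq^\star)$ be optimal for \eqref{relaxed-pf}. Since $\bdw$ enters only through the disclosure constraint $\sum_{i,j}u_{ij}\log w_{ij}\ge\hat{R}$, optimality forces $\bdw^\star$ to maximize $\sum_{i,j}u_{ij}^\star\log w_{ij}$ subject to $\sum_i w_{ij}=1$; otherwise the constraint could be slackened, strictly enlarging the admissible region for $\bdu$ and potentially decreasing $\tilde{f}$. A Lagrange multiplier calculation column-by-column yields $w_{ij}^\star=u_{ij}^\star/\sum_i u_{ij}^\star$. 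Similarly, $\bdr$ appears in the objective only through $-\sum_{j,k}\big(\sum_i s_{ki}u_{ij}^\star\big)\log r_j$, which collapses to $-\sum_j\big(\sum_i u_{ij}^\star\big)\log r_j$ using $\sum_k s_{ki}=1$; minimizing under $\sum_j r_j=1$ gives $r_j^\star=\sum_i u_{ij}^\star$. These two identities reinstate \eqref{relaxed-r} and \eqref{relaxed-w} at the optimum. Finally, optimality of $\bdq^\star$ for fixed $(\bdu^\star,\bdr^\star)$ forces $\tilde{f}^\star=f(\bdu^\star,\bdr^\star)$ by the tightness condition for Jensen's inequality, so $(\bdu^\star,\bdw^\star,\bdr^\star)$ is feasible for \eqref{original-pf} with the same value, completing the converse inequality.

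The main obstacle I anticipate is the boundary case where some $r_j^\star=0$ or $\sum_i u_{ij}^\star=0$, at which the Lagrange formulas degenerate and the ratio defining $\bdq^\star$ is undefined. I would handle this by observing that such a symbol $y_j$ contributes nothing to either objective (by $0\log 0=0$) and can be pruned or assigned an arbitrary conditional distribution without affecting either the value or any constraint, so the argument applies after restriction to the active support. Combining the two inequalities then delivers equality of the optimal values, while the explicit formulas for $\bdw^\star,\bdr^\star,\bdq^\star$ in terms of $\bdu^\star$ give the one-to-one correspondence of optimal triples asserted in the theorem.
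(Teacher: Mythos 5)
Your proof follows essentially the same route as the paper's: one direction extends a feasible triple of \eqref{original-pf} by the Jensen-tight choice of $\bdq$, and the other uses first-order optimality in $\bdq$, $\bdr$, and $\bdw$ to recover the relaxed constraints \eqref{relaxed-constraints}, exactly as the paper does via its KKT system \eqref{opt-conditions}. The one soft spot in your argument --- optimality does not literally \emph{force} $w_{ij}^\star=u_{ij}^\star/r_j^\star$ when the disclosure constraint is inactive, since $\bdw$ then affects neither objective nor any binding constraint --- is shared by the paper's own proof (where $\lambda^\star=0$ leaves \eqref{opt-w} degenerate), so your proposal is as sound as the original, and your explicit treatment of zero-mass symbols $r_j^\star=0$ is a small improvement over it.
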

\begin{proof}
Suppose $(\bdu^\star,\bdw^\star,\bdr^\star)$ is optimal for \eqref{original-pf}, then $(\bdu^\star,\bdw^\star,\bdr^\star,\bdq^\star)$ is feasible for \eqref{relaxed-pf} if we define $q_{ijk}^\star=s_{ki}u_{ij}^\star\Big/\Big(\sum\limits_{i'}s_{ki'}u_{i'j}^\star\Big)$. 
The expression of $\bdq^\star$ implies the identity of two objectives. Since \eqref{relaxed-pf} is an upper bound of \eqref{original-pf}, $(\bdu^\star,\bdw^\star,\bdr^\star,\bdq^\star)$ is optimal for \eqref{relaxed-pf}. 

On the other hand, suppose $(\bdu^\star,\bdw^\star,\bdr^\star,\bdq^\star)$ is optimal for \eqref{relaxed-pf}, then the KKT conditions yield the existence of $\gamma^\star\in\mathbb{R}$, $\bdbeta^\star\in\mathbb{R}^N$, $\bdzeta\in\mathbb{R}^{N\times K}$ such that
\begin{subequations}\label{opt-conditions}
\begin{align}
&\gamma^\star-\sum\limits_{i,k}s_{ki}u_{ij}^\star\Big/r_j^\star=0,~r_j^\star=\Big(\sum\limits_iu_{ij}^\star\Big)\Big/\gamma^\star; \label{opt-r}\\
&\beta_j^\star-\lambda^\star u_{ij}^\star/w_{ij}^\star=0,~w_{ij}^\star=\lambda^\star u_{ij}^\star/\beta_j^\star; \label{opt-w} \\
&\zeta_{jk}^\star-s_{ki}u_{ij}^\star/q_{ijk}^\star=0,~q_{ijk}^\star=s_{ki}u_{ij}^\star/\zeta_{jk}^\star.\label{opt-q}
\end{align}
\end{subequations}
Substituting \eqref{opt-conditions} into \eqref{relaxed-pf} we get $\gamma^\star=1$, $\beta_j^\star=\lambda^\star\sum\limits_iu_{ij}^\star=\lambda^\star r_j^\star$, $\zeta_{jk}^\star=\sum\limits_i s_{ki}u_{ij}^\star$, and thus \eqref{relaxed-constraints} is satisfied and $(\bdu^\star,\bdw^\star,\bdr^\star)$ is feasible for \eqref{original-pf}. Suppose it is not optimal for \eqref{original-pf}, then there exists $(\bdu,\bdw,\bdr)$ such that 
\begin{equation*}
\tilde{f}(\bdu,\bdr,\bdq)=f(\bdu,\bdr)<f(\bdu^\star,\bdr^\star)=\tilde{f}(\bdu^\star,\bdr^\star,\bdq^\star),
\end{equation*}
where $q_{ijk}=s_{ki}u_{ij}\Big/\Big(\sum\limits_{i'}s_{ki'}u_{i'j}\Big)$, and the second equality follows from the expression of $\bdq^\star$. Then $(\bdu^\star,\bdw^\star,\bdr^\star,\bdq^\star)$ is not optimal for \eqref{relaxed-pf}, which is a contradiction.  
It means that $(\bdu^\star,\bdw^\star,\bdr^\star)$ is optimal for \eqref{original-pf}. 
\end{proof}


\section{The Alternating Expectation Minimization Algorithm}
In this section, we propose a convergence guaranteed alternating algorithm to solve problem \eqref{relaxed-pf}. 
Since the update of $\bdq$ corresponds to the E-step of the EM algorithm, and the update of other variables corresponds to the M-step, we name it the Alternating Expectation Minimization (AEM) algorithm. 
\subsection{Algorithm Derivation and Implementation}
We introduce multipliers $\bdalpha\in\mbbR^M$, $\bdbeta\in\mbbR^N$, $\gamma\in\mbbR$, $\bdzeta\in\mbbR^{N\times K}$, $\lambda\in\mbbR^+$ and obtain the Lagrangian of \eqref{relaxed-pf}: 
\begin{align*}
&\mathcal{L}(\bdu,\bdw,\bdr,\bdq;\bdalpha,\bdbeta,\gamma,\bdzeta,\lambda)=\sum\limits_{i,j,k}s_{ki}u_{ij}\Big(\log(s_{ki}u_{ij})-\log r_j \\
&-\log q_{ijk}\Big)+\sum\limits_i\alpha_i\Big(\sum\limits_j u_{ij}-p_i\Big)+\sum\limits_j\beta_j\Big(\sum\limits_i w_{ij}-1\Big) \\
&+\gamma\Big(\sum\limits_j r_j-1\Big)+\sum\limits_{j,k}\zeta_{jk}\Big(\sum\limits_i q_{ijk}-1\Big) \\
&-\lambda\Big(\sum\limits_{i,j}u_{ij}\log w_{ij}-\hat{R}\Big).
\end{align*}
Our key ingredient is to alternatively update the primal variables with their corresponding dual variables simultaneously updated. 
Based on the convexity of the Lagrangian with respect to each variable, we take partial derivatives for each primal variable and obtain their closed-form iterative expressions. 
This update scheme ensures high computation efficiency and offers an accurate descent estimation of the objective.

\subsubsection{Updating $\bdq$ and $\bdzeta$}
The first-order condition yields
\begin{equation*}
\dfrac{\partial\mcL}{\partial q_{ijk}}=-\dfrac{s_{ki}u_{ij}}{q_{ijk}}+\zeta_{jk}=0,~q_{ijk}=\dfrac{s_{ki}u_{ij}}{\zeta_{jk}}.
\end{equation*}
Substituting them into the constraint of $\bdq$ we have
\begin{equation*}
\sum\limits_i(s_{ki}u_{ij})\Big/\zeta_{jk}=1,
~\zeta_{jk}=\sum\limits_i s_{ki}u_{ij}.
\end{equation*}
Then we can update $\bdq$ by
\begin{equation*}
q_{ijk}=s_{ki}u_{ij}\Big/\Big(\sum\limits_{i'}s_{ki'}u_{i'j}\Big),
\end{equation*}
which is exactly the relaxed constraint of $\bdq$ in \eqref{relaxed-q}. 

\subsubsection{Updating $\bdu$ and $\bdalpha,\lambda$}
Define $\phi_{ij}=\sum\limits_k s_{ki}\big(\log q_{ijk}-\log s_{ki}\big)$, then the first-order condition yields
\begin{subequations}
\begin{align*}
\dfrac{\partial\mcL}{\partial u_{ij}}&=\log u_{ij}-\phi_{ij}+1+\alpha_i-\log r_j-\lambda\log w_{ij}=0, \\
u_{ij}&=e^{\lambda\log w_{ij}+\phi_{ij}-\alpha_i-1}r_j.
\end{align*}
\end{subequations}
Substituting them into the constraint of $\bdu$ we have
\begin{subequations}
\begin{align*}
&\sum\limits_j e^{\lambda\log w_{ij}+\phi_{ij}-\alpha_i-1}r_j=p_i, \\
\alpha_i&=\log\Big(\sum\limits_j e^{\lambda\log w_{ij}+\phi_{ij}}r_j\Big)-\log p_i-1.
\end{align*}
\end{subequations}
Then we can update $\bdu$ by
\begin{align*}
u_{ij}=\dfrac{e^{\lambda\log w_{ij}+\phi_{ij}}r_j}{\sum\limits_{j'}e^{\lambda\log w_{ij'}+\phi_{ij'}}r_{j'}}p_i,
\end{align*}
where $\lambda$ can be updated by finding the unique root of the following monotonic function\footnote{We can easily verify that $G(\lambda)\geq 0$ as discussed in \cite{ye2022lmrate,wu2023commot}.} via the Newton's method: 
\begin{subequations}
\begin{align*}
G(\lambda)&=\sum\limits_{i,j}\dfrac{e^{\lambda\log w_{ij}+\phi_{ij}}r_j}{\sum\limits_{j'}e^{\lambda\log w_{ij'}+\phi_{ij'}}r_{j'}}p_i\log w_{ij}-\hat{R}=0.
\end{align*}
\end{subequations}

\subsubsection{Updating $\bdr$ and $\gamma$}
The first-order condition yields
\begin{equation*}
\dfrac{\partial\mcL}{\partial r_j}=-\sum\limits_{i,k}\dfrac{s_{ki}u_{ij}}{r_j}+\gamma=0,~r_j=\sum\limits_i\dfrac{u_{ij}}{\gamma}.
\end{equation*}
Substituting them into the constraint of $\bdr$ we have
\begin{equation*}
\sum\limits_{i,j}u_{ij}\Big/\gamma=1,~\gamma=1,
\end{equation*}
which follows from the fact that $\sum\limits_{i,j}u_{ij}=1$. Then we can update $\bdr$ by
\begin{equation*}
r_j=\sum\limits_i u_{ij},
\end{equation*}
which is exactly the relaxed constraint of $\bdr$ in \eqref{relaxed-r}. 

\subsubsection{Updating $\bdw$ and $\bdbeta$}
The first-order condition yields
\begin{equation*}
\dfrac{\partial\mcL}{\partial w_{ij}}=-\dfrac{\lambda u_{ij}}{w_{ij}}+\beta_j=0,~w_{ij}=\dfrac{\lambda u_{ij}}{\beta_j}.
\end{equation*}
Substituting them the constraint of $\bdw$ we have
\begin{equation*}
\sum\limits_i\lambda u_{ij}\Big/\beta_j=1,~\beta_j=\lambda\sum\limits_i u_{ij}.
\end{equation*}
Then we can update $\bdw$ by
\begin{equation*}
w_{ij}=u_{ij}\Big/\Big(\sum\limits_{i'}u_{i'j}\Big)=u_{ij}/r_j,
\end{equation*}
which is exactly the relaxed constraint of $\bdw$ in \eqref{relaxed-w}. 

To summarize, the proposed AEM algorithm is presented in Algorithm \ref{alg-pf}. 

\begin{algorithm}
\caption{Alternating Expectation Minimization (AEM)}
\label{alg-pf}
\begin{algorithmic}[ht]
\STATE{\bf Input} $p_i=p(x_i)$, $s_{ki}=p(s_k|x_i)$, $\hat{R}$, \textit{max\_iter}
\STATE{\bf Output} min $\sum\limits_{i,j,k}s_{ki}u_{ij}\Big(\log(s_{ki}u_{ij})-\log r_j-\log q_{ijk}\Big)$
\STATE Initialize a feasible solution $u_{ij}=\frac{\textbf{1}\{i=j\}}{M},r_j=\sum\limits_i u_{ij}$
\FOR{$n=1:\textit{max\_iter}$}
    \STATE $q_{ijk}\leftarrow s_{ki}u_{ij}\Big/\Big(\sum\limits_{i'}s_{ki'}u_{i'j}\Big)$
    \STATE $\phi_{ij}\leftarrow\sum\limits_k s_{ki}\big(\log q_{ijk}-\log s_{ki}\big)$
    \STATE Find $\lambda$ such that $G(\lambda)=0$ 
    \STATE $u_{ij}\leftarrow\dfrac{e^{\lambda\log w_{ij}+\phi_{ij}}r_j}{\sum\limits_{j'}e^{\lambda\log w_{ij'}+\phi_{ij'}}r_{j'}}p_i$
    \STATE $r_j\leftarrow\sum\limits_i u_{ij}$
    \STATE $w_{ij}\leftarrow u_{ij}/r_j$
\ENDFOR
\STATE{\bf Return} $\sum\limits_{i,j,k}s_{ki}u_{ij}\Big(\log(s_{ki}u_{ij})-\log r_j-\log q_{ijk}\Big)$
\end{algorithmic}
\end{algorithm}


\subsection{Convergence Analysis}

With the guarantee of Theorem 1, we estimate the descent of the objective \eqref{objective} for model \eqref{relaxed-pf}. For short, the closed-form updates of each primal variable provide the descent in the form of Kullback-Leibler (KL) divergence between the corresponding variables in two consecutive iterations. 

\begin{lemma}
The objective $\tilde{f}(\bdw,\bdr,\bdq)$ is non-increasing, i.e.
\begin{align*}
&\tilde{f}(\bdu^{(n+1)},\bdr^{(n+1)},\bdq^{(n+1)})\leq \tilde{f}(\bdu^{(n+1)},\bdr^{(n)},\bdq^{(n+1)}) \\ \leq&\tilde{f}(\bdu^{(n)},\bdr^{(n)},\bdq^{(n+1)})\leq \tilde{f}(\bdu^{(n)},\bdr^{(n)},\bdq^{(n)}).
\end{align*}
Moreover, the descent of objective can be estimated by
\begin{multline}\label{descent-estimation}
\tilde{f}(\bdu^{(n)},\bdr^{(n)},\bdq^{(n)})-\tilde{f}(\bdu^{(n+1)},\bdr^{(n+1)},\bdq^{(n+1)}) \\
=\sum\limits_{j,k}\Big(\sum\limits_i s_{ki}u_{ij}^{(n)}\Big)D(\bdq_{jk}^{(n+1)}\Vert\bdq_{jk}^{(n)})+D(\bdr^{(n+1)}\Vert\bdr^{(n)}) \\
+D(\bdu^{(n)}\Vert\bdu^{(n+1)})+\lambda^{(n+1)}\sum\limits_j r_j^{(n)}D(\bdw_j^{(n)}\Vert\bdw_j^{(n-1)}),
\end{multline}
where $\bdq_{jk}$ denotes the row of $\bdq$ where the $j$-th column slice intersects with the $k$-th vertical slice, and $\bdw_j$ denotes the $j$-th column of $\bdw$. 
\end{lemma}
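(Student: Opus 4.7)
The plan is to decompose the total descent at iteration $n+1$ into three stages that follow the sequence of primal updates which actually affect $\tilde f$ (the variable $\bdw$ is absent from $\tilde f$ itself, and only enters the Lagrangian through the privacy constraint $\sum_{i,j}u_{ij}\log w_{ij}\geq\hat R$). I first verify the two-sided descent chain displayed in the lemma, then identify each gap with the corresponding KL-divergence term in \eqref{descent-estimation}; summing the three gaps produces both the monotonicity and the quantitative identity in one stroke.

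For the $\bdq$- and $\bdr$-updates the calculations are direct. Substituting the closed-form rule $q_{ijk}^{(n+1)}=s_{ki}u_{ij}^{(n)}/\sum_{i'}s_{ki'}u_{i'j}^{(n)}$ rewrites $s_{ki}u_{ij}^{(n)}$ as $q_{ijk}^{(n+1)}\sum_{i'}s_{ki'}u_{i'j}^{(n)}$, and the first gap collapses to $\sum_{j,k}\bigl(\sum_i s_{ki}u_{ij}^{(n)}\bigr)D(\bdq_{jk}^{(n+1)}\|\bdq_{jk}^{(n)})$. For the third gap, using $\sum_k s_{ki}=1$ to simplify and $r_j^{(n+1)}=\sum_i u_{ij}^{(n+1)}$ to regroup yields $\sum_j r_j^{(n+1)}\log(r_j^{(n+1)}/r_j^{(n)})=D(\bdr^{(n+1)}\|\bdr^{(n)})$. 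Both quantities are non-negative, giving two of the three required inequalities.

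The $\bdu$-stage is the heart of the proof. Using $\sum_k s_{ki}=1$, the $\bdu$-dependent part of the objective becomes $\sum_{i,j}u_{ij}(\log u_{ij}-\log r_j^{(n)}-\phi_{ij}^{(n+1)})$, and the Lagrangian contribution $-\lambda^{(n+1)}\sum_{i,j}u_{ij}\log w_{ij}^{(n)}$ turns the minimization under $\sum_j u_{ij}=p_i$ into a relative-entropy problem whose Gibbs-form solution is exactly $\bdu^{(n+1)}$. The generalized Pythagorean identity for KL divergence then gives, for the feasible previous iterate $\bdu^{(n)}$,
\begin{align*}
&\tilde f(\bdu^{(n)},\bdr^{(n)},\bdq^{(n+1)})-\tilde f(\bdu^{(n+1)},\bdr^{(n)},\bdq^{(n+1)}) \\
&\quad=D(\bdu^{(n)}\|\bdu^{(n+1)})+\lambda^{(n+1)}\sum_{i,j}(u_{ij}^{(n)}-u_{ij}^{(n+1)})\log w_{ij}^{(n)}.
\end{align*}

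The main obstacle is disposing of the residual linear term in $\log w^{(n)}$, and this is where the peculiar $D(\bdw_j^{(n)}\|\bdw_j^{(n-1)})$ contribution arises. My plan is to exploit two tightness identities forced by the Newton step on $\lambda$: at the current step $G(\lambda^{(n+1)})=0$ enforces $\sum_{i,j}u_{ij}^{(n+1)}\log w_{ij}^{(n)}=\hat R$, and at the previous step the same mechanism gave $\sum_{i,j}u_{ij}^{(n)}\log w_{ij}^{(n-1)}=\hat R$. Subtracting them rewrites the residual as $\lambda^{(n+1)}\sum_{i,j}u_{ij}^{(n)}\log(w_{ij}^{(n)}/w_{ij}^{(n-1)})$; substituting the previous-iteration update $u_{ij}^{(n)}=w_{ij}^{(n)}r_j^{(n)}$ and regrouping by columns of $\bdw$ converts it into $\lambda^{(n+1)}\sum_j r_j^{(n)}D(\bdw_j^{(n)}\|\bdw_j^{(n-1)})\geq 0$. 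Assembling the three stages then produces exactly \eqref{descent-estimation} and, since each summand is non-negative, the monotonicity of $\tilde f$.
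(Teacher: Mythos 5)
Your proposal is correct and follows essentially the same route as the paper's Appendix A: the same three-stage decomposition ($\bdq$, then $\bdu$, then $\bdr$), the same use of the tightness identities $\sum_{i,j}u_{ij}^{(n+1)}\log w_{ij}^{(n)}=\hat R=\sum_{i,j}u_{ij}^{(n)}\log w_{ij}^{(n-1)}$ to convert the residual linear term into $\lambda^{(n+1)}\sum_j r_j^{(n)}D(\bdw_j^{(n)}\Vert\bdw_j^{(n-1)})$, and the same marginal identities $\sum_j u_{ij}^{(n)}=\sum_j u_{ij}^{(n+1)}=p_i$. The only difference is cosmetic: you invoke the Gibbs-form/Pythagorean identity for the $\bdu$-step and cancel the $\hat R$ terms afterwards, whereas the paper adds the Lagrangian terms up front before applying the update rule.
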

\begin{proof}
We have the following estimations proved in Appendix A: 
\begin{align*}
&\tilde{f}(\bdu^{(n)},\bdr^{(n)},\bdq^{(n)})-\tilde{f}(\bdu^{(n)},\bdr^{(n)},\bdq^{(n+1)}) \\
=&\sum\limits_{j,k}\Big(\sum\limits_i s_{ki}u_{ij}^{(n)}\Big)D(\bdq_{jk}^{(n+1)}\Vert\bdq_{jk}^{(n)}), \\
&\tilde{f}(\bdu^{(n)},\bdr^{(n)},\bdq^{(n+1)})-\tilde{f}(\bdu^{(n+1)},\bdr^{(n)},\bdq^{(n+1)}) \\
=&D(\bdu^{(n)}\Vert\bdu^{(n+1)})+\lambda^{(n+1)}\sum\limits_j r_j^{(n)}D(\bdw_j^{(n)}\Vert\bdw_j^{(n-1)}), \\
&\tilde{f}(\bdu^{(n+1)},\bdr^{(n)},\bdq^{(n+1)})-f(\bdu^{(n+1)},\bdr^{(n+1)},\bdq^{(n+1)}) \\
=&D(\bdr^{(n+1)}\Vert\bdr^{(n)}). \\
\end{align*}
The non-increasing property follows from the non-negativity of KL divergence. 
\end{proof}

\begin{lemma}
The objective $\tilde{f}(\bdu,\bdr,\bdq)$ is non-negative.  
\end{lemma}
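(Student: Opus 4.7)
The plan is to recast $\tilde{f}$ as a weighted sum of Kullback--Leibler (KL) divergences, so that non-negativity follows from the non-negativity of each KL divergence applied to genuine probability distributions built from the feasibility constraints of \eqref{relaxed-pf}.

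First, I would observe that $P(s_k, x_i, y_j) := s_{ki} u_{ij}$ is a valid joint probability distribution on $\mathcal{S} \times \mathcal{X} \times \mathcal{Y}$: the identities $\sum_k s_{ki} = 1$, $\sum_j u_{ij} = p_i$, and $\sum_i p_i = 1$ together yield $\sum_{i,j,k} s_{ki} u_{ij} = 1$. Symmetrically, for each fixed $k$, the quantity $Q_k(i,j) := r_j q_{ijk}$ is itself a probability distribution on $\mathcal{X} \times \mathcal{Y}$, because $\sum_{i,j} r_j q_{ijk} = \sum_j r_j \sum_i q_{ijk} = 1$ by the normalizations $\sum_j r_j = 1$ and $\sum_i q_{ijk} = 1$. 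Setting $\pi_k := \sum_i s_{ki} p_i$ (the marginal of $P$ on $\mathcal{S}$, with $\sum_k \pi_k = 1$) and $P_k(i,j) := s_{ki} u_{ij}/\pi_k$ (the associated conditional of $(X, Y)$ given $S = s_k$), a direct rearrangement of the logarithm in \eqref{objective} yields
\begin{align*}
\tilde{f}(\bdu, \bdr, \bdq) = \sum_k \pi_k \log \pi_k + \sum_k \pi_k D(P_k \Vert Q_k),
\end{align*}
whose second summation is already $\geq 0$ since each $D(P_k \Vert Q_k) \geq 0$.

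The main step I expect to be delicate is controlling the residual entropic constant $\sum_k \pi_k \log \pi_k = -H(S)$, which is non-positive in general. To close this gap and obtain the exact non-negativity asserted in the lemma, I plan to supplement the KL decomposition above with a second application of the log-sum inequality summed along $i$ (using $\sum_i q_{ijk} = 1$), and with the disclosure-threshold constraint $\sum_{i,j} u_{ij} \log w_{ij} \geq \hat{R}$, which together should furnish an additional non-negative contribution of magnitude at least $H(S)$ that absorbs the negative constant. The principal difficulty will be arranging these ingredients so that the cancellation is uniform over the entire feasible set of \eqref{relaxed-pf}, rather than only at points where the relaxed constraints in \eqref{relaxed-constraints} happen to be active.
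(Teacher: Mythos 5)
Your KL decomposition is valid: with $\pi_k=\sum_i s_{ki}p_i$, $P_k(i,j)=s_{ki}u_{ij}/\pi_k$ and $Q_k(i,j)=r_jq_{ijk}$ one indeed gets $\tilde{f}(\bdu,\bdr,\bdq)=\sum_k\pi_k\log\pi_k+\sum_k\pi_k D(P_k\Vert Q_k)\geq -H(S)$, and this is a genuinely different (and more informative) route than the paper's, which first drops the $-\log r_j-\log q_{ijk}$ terms and then invokes Jensen's inequality. However, the step you yourself flag as delicate --- absorbing the residual $-H(S)$ to reach exact non-negativity --- is a genuine gap that cannot be closed, because the lemma as stated is false on the feasible set of \eqref{relaxed-pf}. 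Concretely, take $K=2$ with $s_{ki}=1/2$ for all $k,i$, $\mathcal{Y}=\mathcal{X}$, $u_{ij}=p_i\mathbf{1}\{i=j\}$, $w_{ij}=q_{ijk}=\mathbf{1}\{i=j\}$, $r_j=p_j$: every constraint of \eqref{relaxed-pf} holds (the disclosure constraint reads $0\geq\hat{R}=R-H(X)$), yet $\tilde{f}=-\log 2<0$. The structural reason is that the paper's objective $f$ equals $-H(S|Y)=I(S;Y)-H(S)$, not $I(S;Y)$; at the matched choices \eqref{relaxed-constraints} one has $\tilde{f}=f\leq 0$, so no combination of the log-sum inequality and the constraint $\sum_{i,j}u_{ij}\log w_{ij}\geq\hat{R}$ can manufacture the missing $H(S)$ uniformly over the feasible set.

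For what it is worth, the paper's own proof breaks at the corresponding point: since $\phi(x)=x\log x$ is superadditive on $[0,\infty)$, the correct inequality is $\sum_m a_m\log a_m\leq\big(\sum_m a_m\big)\log\big(\sum_m a_m\big)$, the reverse of what is claimed there (e.g.\ $a_1=a_2=1/2$ gives $-\log 2\not\geq 0$). The honest and sufficient conclusion --- which your decomposition already proves without any further work --- is the finite lower bound $\tilde{f}\geq-H(S)\geq-\log K$. That is all that is actually used downstream: together with Lemma~1 it gives convergence of the monotone objective sequence, and Theorem~2 goes through unchanged. I would therefore recommend stopping at $\tilde{f}\geq-H(S)$ and restating the lemma as lower-boundedness rather than non-negativity.
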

\begin{proof}
\begin{align*}
\tilde{f}(\bdu,\bdr,\bdq)=&\sum\limits_{i,j,k}s_{ki}u_{ij}\Big(\log(s_{ki}u_{ij})-\log r_j-\log q_{ijk}\Big) \\
\geq&\sum\limits_{i,j,k}s_{ki}u_{ij}\log(s_{ki}u_{ij}) \\\geq&\Big(\sum\limits_{i,j,k}s_{ki}u_{ij}\Big)\log\Big(\sum\limits_{i,j,k}s_{ki}u_{ij}\Big)=0.
\end{align*}
The first inequality is due to $0\leq r_j\leq 1$, $0\leq q_{ijk}\leq 1$, and the second inequality follows from Jensen's inequality. 
\end{proof}

The objective converges since it is non-increasing and lower bounded throughout iterations. Furthermore, the convergence of iterative sequence is also guaranteed.

\begin{theorem}
The sequence $\{(\bdu^{(n)},\bdw^{(n)},\bdr^{(n)})\}$ converges to a stationary point $(\bdu^\star,\bdw^\star,\bdr^\star)$. 
\end{theorem}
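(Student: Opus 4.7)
The plan is to combine the descent estimation of Lemma~1 with the lower bound of Lemma~2, and then pass from convergence of the objective to convergence of the iterates via Pinsker's inequality and a compactness argument.

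First, I would observe that Lemma~1 and Lemma~2 together imply that $\{\tilde f(\bdu^{(n)},\bdr^{(n)},\bdq^{(n)})\}$ is non-increasing and bounded below by $0$, hence convergent. Telescoping the identity \eqref{descent-estimation} over $n$ then shows
\begin{equation*}
\sum_{n=1}^{\infty}\Big[D(\bdu^{(n)}\Vert\bdu^{(n+1)})+D(\bdr^{(n+1)}\Vert\bdr^{(n)})+\lambda^{(n+1)}\sum_j r_j^{(n)}D(\bdw_j^{(n)}\Vert\bdw_j^{(n-1)})+\sum_{j,k}\Big(\sum_i s_{ki}u_{ij}^{(n)}\Big)D(\bdq_{jk}^{(n+1)}\Vert\bdq_{jk}^{(n)})\Big]<\infty,
\end{equation*}
so each summand tends to zero. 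This is the decisive step that produces successive‑iterate stability.

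Next, I would invoke Pinsker's inequality $D(p\Vert q)\ge \tfrac{1}{2}\|p-q\|_1^{2}$ to translate the vanishing KL terms into $\|\bdu^{(n+1)}-\bdu^{(n)}\|_1\to 0$ and $\|\bdr^{(n+1)}-\bdr^{(n)}\|_1\to 0$. For $\bdw$ one has to be slightly more careful because of the factor $\lambda^{(n+1)}r_j^{(n)}$; I would argue that on coordinates where $r_j^{(n)}$ stays bounded away from $0$ the same conclusion applies, while on coordinates where $r_j^{(n)}\to 0$ the corresponding $u_{ij}^{(n)}$ also vanishes so that the $\bdw$ update $w_{ij}^{(n)}=u_{ij}^{(n)}/r_j^{(n)}$ is irrelevant for the objective; in either case successive iterates of $\bdw$ stabilize. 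Since all iterates lie in the compact product of probability simplices, by the Bolzano–Weierstrass theorem there exists a convergent subsequence $(\bdu^{(n_\ell)},\bdw^{(n_\ell)},\bdr^{(n_\ell)})\to (\bdu^\star,\bdw^\star,\bdr^\star)$. The successive-stability just obtained, combined with connectedness of the accumulation-point set, upgrades subsequential convergence to convergence of the whole sequence.

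Finally, to identify the limit as a stationary point, I would pass to the limit in the closed-form update rules derived in Section~IV-A. Since each update is itself the first-order KKT condition of the corresponding subproblem and the involved functions are continuous (with $\lambda^{(n)}$ updated as the unique root of a continuous monotone function $G_n$, whose limit is again monotone and continuous), the limit triple $(\bdu^\star,\bdw^\star,\bdr^\star)$ together with the induced $\bdq^\star$ and multipliers $(\bdalpha^\star,\bdbeta^\star,\gamma^\star,\bdzeta^\star,\lambda^\star)$ satisfies the KKT system \eqref{opt-conditions} of the relaxed model \eqref{relaxed-pf}. Theorem~1 then transfers this KKT point of \eqref{relaxed-pf} to a stationary point of the original PF problem \eqref{original-pf}.

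The main obstacle I anticipate is the step from ``successive differences vanish'' to ``the sequence converges'', since Pinsker only gives $\sum_n\|\cdot\|_1^{2}<\infty$, which does not by itself imply a Cauchy property. I would handle this by showing that the set of accumulation points is non-empty (compactness), connected (since $\|x^{(n+1)}-x^{(n)}\|\to 0$), and consists entirely of KKT points; under the generic assumption that KKT points of the relaxed model are isolated this forces uniqueness of the limit. The secondary technical nuisance is controlling $\bdw$ along coordinates where $r_j^{(n)}$ may degenerate, which I plan to dispatch by the comment above that $u_{ij}^{(n)}/r_j^{(n)}$ only affects the objective through $u_{ij}^{(n)}$ itself.
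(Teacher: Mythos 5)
Your proposal follows the same overall route as the paper's proof: telescope the descent identity \eqref{descent-estimation} against the lower bound of Lemma~2 to get summability of the KL terms, convert these into vanishing $\ell_1$ distances via Pinsker's inequality, and pass to the limit in the closed-form update rules to read off the KKT conditions. The substantive difference is that you explicitly confront the step the paper elides: from $\sum_n\Vert\bdr^{(n+1)}-\bdr^{(n)}\Vert_1^2<\infty$ the paper concludes directly that $\{\bdr^{(n)}\}$ converges, which is a non sequitur --- summable squared increments do not yield a Cauchy sequence (increments of order $1/n$ are square-summable yet can accumulate unboundedly, or oscillate indefinitely within the compact simplex). You correctly flag this and propose the standard repair: compactness gives accumulation points, vanishing successive differences make the accumulation set connected, every accumulation point satisfies the KKT system, and an isolatedness hypothesis on the KKT set then forces a unique limit. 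This makes your argument conditional on a genericity assumption that the theorem statement does not include, but it is the more rigorous of the two; without some such hypothesis (or a Kurdyka--{\L}ojasiewicz-type argument), neither proof establishes convergence of the full sequence, only subsequential convergence to stationary points. Your separate treatment of the $\bdw$ coordinates where $r_j^{(n)}$ may degenerate is likewise a refinement the paper does not address --- it simply asserts that the update rule $w_{ij}=u_{ij}/r_j$ "ensures" convergence of $\{\bdw^{(n)}\}$ from that of $\{\bdu^{(n)}\}$ and $\{\bdr^{(n)}\}$, which requires exactly the care you supply when some $r_j^\star=0$.
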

\begin{proof}
Applying Pinsker's inequality to \eqref{descent-estimation} we have 
\begin{align*}
&\tilde{f}(\bdu^{(n)},\bdr^{(n)},\bdq^{(n)})-\tilde{f}(\bdu^{(n+1)},\bdr^{(n+1)},\bdq^{(n+1)}) \\
\geq&\dfrac{1}{2}\Bigg(\sum\limits_{j,k}\Big(\sum\limits_i s_{ki}u_{ij}^{(n)}\Big)\Vert\bdq_{jk}^{(n+1)}-\bdq_{jk}^{(n)}\Vert^2+\Vert\bdr^{(n+1)}-\bdr^{(n)}\Vert_1^2 \\
+&\Vert\bdu^{(n)}-\bdu^{(n+1)}\Vert_1^2+\lambda^{(n+1)}\sum\limits_j r_j^{(n)}\Vert\bdw_j^{(n)}-\bdw_j^{(n-1)}\Vert_1^2\Bigg)\geq 0.
\end{align*}
Since the objective converges, we have
\begin{equation*}
\sum\limits_{n=1}^\infty\big(\tilde{f}(\bdu^{(n)},\bdr^{(n)},\bdq^{(n)})-\tilde{f}(\bdu^{(n+1)},\bdr^{(n+1)},\bdq^{(n+1)})\big)<+\infty.
\end{equation*}
This implies $\sum\limits_{n=1}^\infty\Vert\bdr^{(n+1)}-\bdr^{(n)}\Vert_1^2<+\infty$, so $\{\bdr^{(n)}\}$ converges. Similar analysis goes for $\{\bdu^{(n)}\}$. The update rule of $\bdw$ ensures the convergence of $\{\bdw^{(n)}\}$. 
Denote the limit point by $(\bdu^\star,\bdw^\star,\bdr^\star)$, then the iterative scheme guarantees the feasibility of the limit point. 
Taking limit of our iterative expressions, the KKT conditions are satisfied at $(\bdu^\star,\bdw^\star,\bdr^\star)$.
\end{proof}
%


\section{Numerical Results}

This section evaluates our AEM algorithm on a synthetic distribution and two real-world datasets of different sizes. 
These experiments have been implemented by Matlab R2023b on a laptop with 16G RAM and one Intel(R) Core(TM) i7-12700H CPU @ 2.30GHz. 

\subsection{Experiments on A Synthetic Distribution}
The synthetic conditional distribution in our experiment is given by\cite{huang2022splitting}
\begin{align*}
P_{S|X}&=\begin{pmatrix} 0.9 & 0.08 & 0.4 \\ 0.025 & 0.82 & 0.05 \\ 0.075 & 0.1 & 0.55\end{pmatrix}.
\end{align*}
We evaluate the performance with uniform and non-uniform $P_X$ respectively given by
\begin{align*}
P_{X,\text{unif}}=\begin{pmatrix}1/3 & 1/3 & 1/3\end{pmatrix}^T,P_{X,\text{nonunif}}=\begin{pmatrix}0.1 & 0.3 & 0.6\end{pmatrix}^T.
\end{align*}
In our experiment, we set $N=4$ and the maximum number of iterations 500. 
We compare the AEM algorithm with the DRS method \cite{huang2022splitting} and plot PF curves on the information plane given in Fig. \ref{fig:synthetic}. 
The reported values $I(S;Y)$ are the best ones by performing 30 different trials for each $R\leq H(X)$.

\begin{figure}[htbp]
    \centering
    \includegraphics[width=\linewidth]{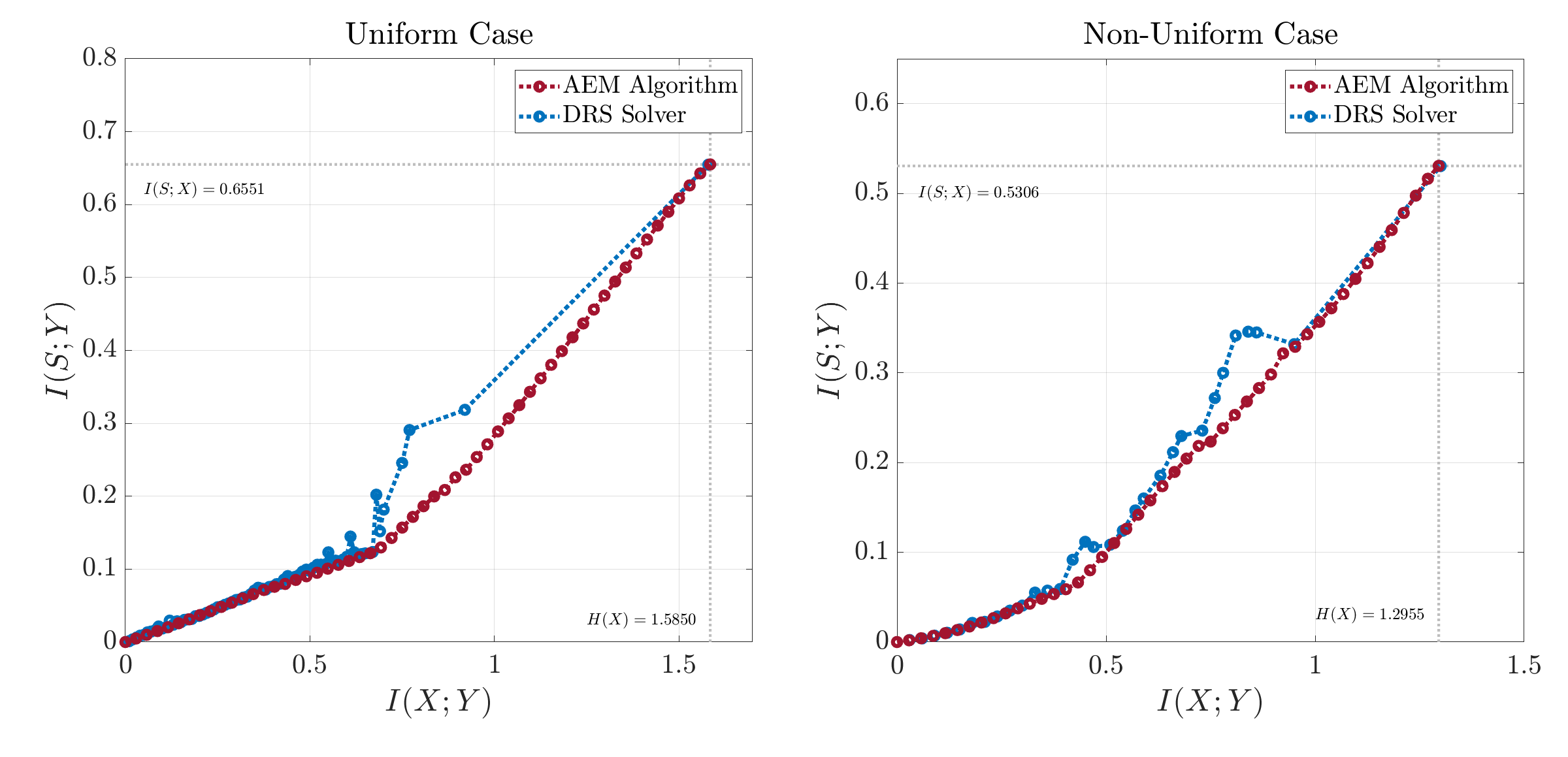}
    \caption{Comparison of PF curves between the AEM algorithm (red dashed line) and the DRS method (blue dashed line).}
    \label{fig:synthetic}
\end{figure}

Both algorithms reach the theoretical bound $I(X;Y)=H(X)$ and $I(S;Y)=I(S;X)$. 
Compared to the DRS method, the proposed AEM algorithm provides more uniform points and smoother curves, demonstrating its numerical stability. 
In contrast, the DRS method performs worse on certain disclosures and shows poorer ability to output complete curves.

\subsection{Experiment on Real-World Datasets}
We evaluate the performance on the following two datasets: ``Heart failure clinical records" dataset and ``Census income" dataset from the UCI Machine Learning Repository\cite{blake1998uci}. 
The former dataset has 299 items and 13 attributes. We select $\mathcal{S}=\{$``sex",``death"\} and $\mathcal{X}=\{$``anaemia",``high blood pressure",``diabetes",``smoking"$\}$ where all selected attributes are binary, so $|\mathcal{S}|=4$, $|\mathcal{X}|=16$. 
The latter dataset has 32561 items and 14 attributes. We select $\mathcal{S}=\{$``age",``income level"$\}$, and $\mathcal{X}=\{$``age",``gender",``education level"$\}$ where all selected attributes are all integers, so $|\mathcal{S}|=10$, $|\mathcal{X}|=160$. 
The distributions are taken empirically and normalized after adding a perturbation of $10^{-3}$ to each entry. 

\begin{figure}[htbp]
    \centering
    \includegraphics[width=\linewidth]{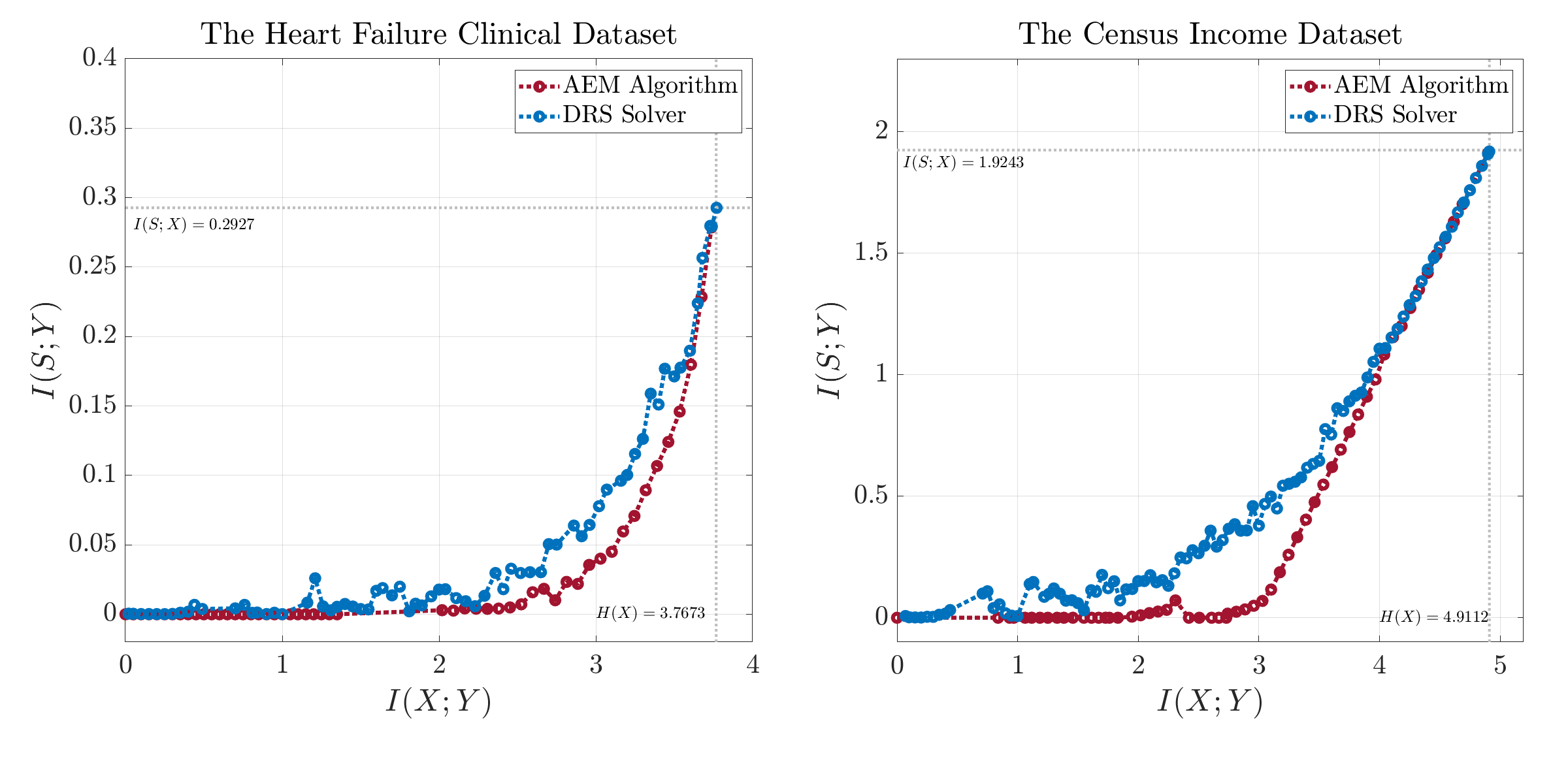}
    \caption{Performance comparison between the AEM algorithm and the DRS method on two real-world datasets. }
    \label{fig:clinical}
\end{figure}

As shown in Fig. \ref{fig:clinical}, the AEM algorithm reaches almost perfect privacy (\textit{i.e.}, $I(S;Y)\approx 0$) for a wide range of disclosure thresholds. 
In contrast, we found that the DRS method performs worse in a range of disclosures. This may be due to the difficulty of striking a balance between convergence and numerical stability, as discussed in Section I. A specific explanation is attached in Appendix B. 
%
%

\section{Conclusion}
We propose a novel approach to solve the PF problem where the objective is replaced with an upper bound under the EM framework, and several constraints given by the Markov chain and transition probability conditions are relaxed. 
The equivalence between the original model and the upper bound relaxation variant is further proven. 
Based on the new model, we develop the AEM algorithm by analyzing the Lagrangian, which turns out to recover the relaxed constraints with theoretically guaranteed convergence. 
Numerical experiments on synthetic and real-world datasets demonstrate effectiveness of our approach. 
The extension of our approach to continuous scenarios is also interesting and worthy of further research. 

\newpage
\bibliographystyle{IEEEtran}
\bibliography{ref}

\newpage
\begin{appendices}
\section{Convergence Analysis}
We estimate the descent caused by the update of each variable in two adjacent iterations. 

\subsubsection{Descent caused by the update of $\bdq$}
The update rule of $\bdq$ gives
\begin{align}
&\tilde{f}(\bdu^{(n)},\bdr^{(n)},\bdq^{(n)})-\tilde{f}(\bdu^{(n)},\bdr^{(n)},\bdq^{(n+1)}) \notag \\
=&\sum\limits_{i,j,k}s_{ki}u_{ij}^{(n)}\log\dfrac{q_{ijk}^{(n+1)}}{q_{ijk}^{(n)}} \notag \\
=&\sum\limits_{j,k}\Big(\sum\limits_i s_{ki}u_{ij}^{(n)}\Big)\sum\limits_i q_{ijk}^{(n+1)}\log\dfrac{q_{ijk}^{(n+1)}}{q_{ijk}^{(n)}} \label{update-of-q} \\
=&\sum\limits_{j,k}\Big(\sum\limits_i s_{ki}u_{ij}^{(n)}\Big)D(\bdq_{jk}^{(n+1)}\Vert\bdq_{jk}^{(n)})\geq 0. \notag
\end{align}
In \eqref{update-of-q} we apply the update rule of $\bdq$, then we change the summation order to sum over $i$ first. 

\subsubsection{Descent caused by the update of $\bdu$}
The update rule of $\bdu$ gives
\begin{align}
&\tilde{f}(\bdu^{(n)},\bdr^{(n)},\bdq^{(n+1)})-\tilde{f}(\bdu^{(n+1)},\bdr^{(n)},\bdq^{(n+1)}) \notag \\
=&\sum\limits_{i,j,k}u_{ij}^{(n)}\Big(s_{ki}\log\dfrac{s_{ki}u_{ij}^{(n)}}{q_{ijk}^{(n+1)}r_j^{(n)}}-\lambda^{(n+1)}\log w_{ij}^{(n-1)}\Big) \notag \\
&-\sum\limits_{i,j,k}u_{ij}^{(n+1)}\Big(s_{ki}\log\dfrac{s_{ki} u_{ij}^{(n+1)}}{q_{ijk}^{(n+1)}r_j^{(n)}}-\lambda^{(n+1)}\log w_{ij}^{(n)}\Big) \label{adding-lagrangian-term} \\
=&\sum\limits_{i,j}u_{ij}^{(n)}\log\dfrac{u_{ij}^{(n)}}{e^{\lambda^{(n+1)}\log w_{ij}^{(n)}+\phi_{ij}^{(n+1)}}r_j^{(n)}} \notag \\
&-\sum\limits_{i,j}u_{ij}^{(n+1)}\log\dfrac{p_i}{\sum\limits_{j'}e^{\lambda^{(n+1)}\log w_{ij'}^{(n)}+\phi_{ij'}^{(n+1)}}r_{j'}^{(n)}} \notag \\
&+\lambda^{(n+1)}\sum\limits_{i,j}u_{ij}^{(n)}\log\dfrac{w_{ij}^{(n)}}{w_{ij}^{(n-1)}} \label{update-of-u} \\
=&\sum\limits_{i,j}u_{ij}^{(n)}\log\dfrac{u_{ij}^{(n)}}{e^{\lambda^{(n+1)}\log w_{ij}^{(n)}+\phi_{ij}^{(n+1)}}r_j^{(n)}} \notag\\
&-\sum\limits_{i,j}u_{ij}^{(n)}\log\dfrac{p_i}{\sum\limits_{j'}e^{\lambda^{(n+1)}\log w_{ij'}^{(n)}+\phi_{ij'}^{(n+1)}}r_{j'}^{(n)}} \notag\\
&+\lambda^{(n+1)}\sum\limits_{i,j}u_{ij}^{(n)}\log\dfrac{w_{ij}^{(n)}}{w_{ij}^{(n-1)}} \label{change-index} \\
=&\sum\limits_{i,j}u_{ij}^{(n)}\log\dfrac{u_{ij}^{(n)}}{u_{ij}^{(n+1)}}+\lambda^{(n+1)}\sum\limits_{i,j}r_j^{(n)}w_{ij}^{(n)}\log\dfrac{w_{ij}^{(n)}}{w_{ij}^{(n-1)}} \notag\\
=&D(\bdu^{(n)}\Vert\bdu^{(n+1)})+\lambda^{(n+1)}\sum\limits_j r_j^{(n)}D(\bdw_j^{(n)}\Vert\bdw_j^{(n-1)})\geq 0.\notag
\end{align}

%
Noticing that the update rule of $\bdu$ yields
\begin{equation*}
\hat{R}=\sum\limits_{i,j}u_{ij}^{(n+1)}\log w_{ij}^{(n)}=\sum\limits_{i,j}u_{ij}^{(n)}\log w_{ij}^{(n-1)}
\end{equation*}
in two consecutive iterations, we add these terms to get \eqref{adding-lagrangian-term} such that the update rule of $\bdu$ can be applied in \eqref{update-of-u}. 
The marginal distribution $p_i=\sum\limits_j u_{ij}^{(n)}=\sum\limits_j u_{ij}^{(n+1)}$ implies the derivation of \eqref{change-index}. 
In the final representation, $\bdw_j$ denotes the $j$-th column of $\bdw$. 

\subsubsection{Descent caused by the update of $\bdr$}
The update rule of $\bdr$ yields
\begin{align}
&\tilde{f}(\bdu^{(n+1)},\bdr^{(n)},\bdq^{(n+1)})-\tilde{f}(\bdu^{(n+1)},\bdr^{(n+1)},\bdq^{(n+1)}) \notag\\
=&\sum\limits_{j,k}\Big(\sum\limits_i s_{ki}u_{ij}^{(n+1)}\Big)\log\dfrac{r_j^{(n+1)}}{r_j^{(n)}} \notag\\
=&\sum\limits_{i,j}u_{ij}^{(n+1)}\log\dfrac{r_j^{(n+1)}}{r_j^{(n)}}=\sum\limits_j r_j^{(n+1)}\log\dfrac{r_j^{(n+1)}}{r_j^{(n)}} \label{update-of-r}\\
=&D(\bdr^{(n+1)}\Vert\bdr^{(n)})\geq 0.\notag
\end{align}
We change the summation order to sum over $k$ first, and in \eqref{update-of-r} we apply the update rule of $\bdr$. 

With the descent estimated, we can derive Lemma 1 in the context.

\section{Explanation in Large-Scale Experiment}
In \cite{huang2022splitting}, the objective is written as $\mathcal{L}(p,q,v)=F(p)+G(q)+\langle v,Ap-Bq\rangle+\dfrac{c}{2}\Vert Ap-Bq\Vert^2$, where $p,q$ represent $P_{Y|S}$ and $P_{Y|X}$ respectively, $A,B$ are coefficients. 
The function $G(q)$ is $\sigma_G$-weakly convex with $\sigma_G=2|\mathcal{Y}|M_q\Big(|\beta-1|+|\mathcal{X}|\Big)$, where $|\mathcal{X}|,|\mathcal{Y}|$ represent the cardinality of $\mathcal{X}$ and $\mathcal{Y}$ respectively, $\beta$ is the tradeoff parameter, and $M_q=\epsilon_{Y|X}^{-1}$, $\epsilon_{Y|X}$ is the infimum of $P_{Y|X}$. 
The subproblem
\begin{align*}
q^{k+1}=\underset{q\in\Omega_q}{\text{argmin}}~\mathcal{L}(p^{k+1},q,v^{k+1/2})
\end{align*}
is convex, and hence can be solved via gradient descent if
\begin{align*}
c&>M_q\dfrac{M_q\alpha\sigma_G+\sqrt{(M_q\alpha\sigma_G)^2+8(2-\alpha)L_q^2\lambda_B^2\mu_{BB^T}}}{4-2\alpha} \\
&>\dfrac{\alpha}{2-\alpha}M_q^2\sigma_G>\dfrac{2\alpha}{2-\alpha}M_q^3|\mathcal{X}||\mathcal{Y}|,
\end{align*}
where $\alpha$ is the relaxation coefficient, $\lambda_B$ and $\mu_{BB^T}$ are constant coefficients determined by $B$, and $L_q=\epsilon_{Z|X}^{-1}$, $\epsilon_{Z|X}^{-1}$ is the infimum of $P_{Z|X}$. 
The lower bound is proportional to $|\mathcal{X}|$ and $|\mathcal{Y}|$, and thus increases with the growing scale of the problem.  
Meanwhile, too large penalty brings numerical instability since it causes gradient explosion, consequently the numerical computation exceeds the computing ability of our device.  
Therefore, we can only obtain a locally sub-optimal solution by applying the DRS method to a large-scale dataset.

\end{appendices}

\end{document}